\documentclass[journal]{IEEEtran}

\usepackage{amsmath,amssymb}
\usepackage{booktabs}
\usepackage{cite}
\usepackage{nicefrac}
\usepackage{graphicx}
\usepackage{xcolor}

\newtheorem{theorem}{Theorem}
\newtheorem{definition}{Definition}
\newtheorem{remark}{Remark}

\newcommand{\Fig}[1]{\figurename~\ref{#1}}

\DeclareMathOperator\arctanh{arctanh}
\DeclareMathOperator\sign{sign}

\begin{document}

\title{Learning LDPC codes with quantized density evolution over relaxed protographs}

\author{
  \IEEEauthorblockN{Gennady Shutkov, Dmitry Artemasov, Alexey Frolov, Pavel Rybin, Kirill Andreev}

  \IEEEauthorblockA{\textit{Center for Next Generation Wireless and IoT} \\
  \textit{Skolkovo Institute of Science and Technology}\\
  Moscow, Russia \\
  \{g.shutkov, d.artemasov, al.frolov, p.rybin, k.andreev\}@skoltech.ru}
}

\maketitle

\begin{abstract}
We consider the problem of designing low-density parity-check (LDPC) codes for a given iterative decoder. Although numerous methods are available for evaluating LDPC code performance, including direct simulation, density evolution (DE) and EXIT-chart analysis, the selection of a parity-check matrix remains a challenging combinatorial optimization problem. Existing design approaches often rely on population-based search, random mutations, genetic algorithms, or their variants, which require careful parameter tuning and can be computationally expensive. Recent gradient descend (GD)-based code-design methods optimize relaxed parity-check matrices by differentiating through decoder simulations. However, such decoder-in-the-loop strategies rely on noisy Monte Carlo estimates, require line search over soft matrix representations, and remain costly for long LDPC codes. In addition, although the optimization is performed over a relaxed representation, the loss function is typically evaluated only at integer-valued parity-check matrices. In this work, we continue this line of research and focus on the design of long protograph-based LDPC codes. We propose a deterministic GD-based framework that operates directly with a relaxed protograph representation, where each protograph entry is interpreted as the probability that the corresponding element is equal to one. The proposed loss function is based on DE bit error rate (BER) performance and, importantly, can be evaluated directly for relaxed protographs. To justify this relaxation, we associate the relaxed representation with an ensemble of binary protographs and show that the proposed relaxed DE yields the ensemble-averaged DE performance. The resulting optimization procedure is fully autonomous and can employ standard GD optimization. Due to deterministic DE evaluation and informative gradient magnitudes, the proposed approach provides fast and reliable convergence. Numerical experiments for the min-sum decoder demonstrate that the optimized protographs outperform 5G LDPC codes with the same protograph dimensions.
\end{abstract}

\begin{IEEEkeywords}
LDPC codes, protographs, density evolution, gradient descent, code optimization.
\end{IEEEkeywords}

\section{Introduction}

Low-density parity-check (LDPC) codes are among the most successful classes of forward-error-correction codes. Since the original works of Gallager and Tanner~\cite{Gallager1962,Tanner1981}, the construction of LDPC codes has become a long-standing and practically important research problem. Under belief-propagation (BP), min-sum, and related message-passing decoders, the performance of an LDPC code is determined not only by its rate and block length, but also by the detailed structure of the Tanner graph. Consequently, LDPC code design is, to a large extent, a graph-design problem.

Classical LDPC design methods address this problem through a combination of asymptotic analysis and finite-length graph construction. A large body of work has developed methods for evaluating the performance of LDPC codes with different graph structures, decoding rules, and channel models. Density evolution (DE) predicts iterative-decoding thresholds and provides a principled way to optimize irregular ensembles~\cite{RichardsonUrbanke2001,Richardson2001Design}, while extrinsic information transfer (EXIT) charts give a graphical tool for analyzing convergence of iterative decoders~\cite{TenBrink2001}. For protograph-based ensembles, protograph EXIT (P-EXIT) analysis adapts this idea to the base graph and enables efficient threshold optimization of structured LDPC codes~\cite{Liva2007}. At finite lengths, graph-construction criteria such as progressive edge growth (PEG)~\cite{Hu2005} and approximate cycle extrinsic message degree (ACE)~\cite{Tian2004} are used to improve local graph structure and suppress harmful short cycles. These methods have led to powerful capacity-approaching and implementation-friendly LDPC code families, including protograph-based and quasi-cyclic constructions~\cite{Chung2001,Thorpe2003,Divsalar2009}.

Despite this progress in performance evaluation, the central synthesis question remains open: how should one construct the parity-check matrix or protograph that is best suited to a prescribed decoder and implementation setting? Classical ensemble tools optimize asymptotic or protograph-level criteria, whereas the final design must satisfy concrete constraints such as a fixed lifting structure, a finite number of decoding iterations, quantized messages, puncturing patterns, and a target channel model. As a result, the practical design loop often still relies on discrete, or combinatorial, optimization over binary or integer-valued base matrices. Typical procedures include local random search, where one or several entries of the base matrix are randomly flipped and the resulting candidate is evaluated, simulated annealing or related mutation-based heuristics that accept or reject such flips according to a search rule, and population-based methods such as the decoder-in-the-loop genetic optimization of LDPC matrices in~\cite{Elkelesh2019}. These methods are attractive because they work directly with binary matrices and can optimize the same finite-length or decoder-aware metric that is used for evaluation. Their main drawback is that each candidate matrix requires a separate metric evaluation, which is expensive when the metric is based on density evolution or decoder simulation. Moreover, random flips do not provide directional information: the search can only compare already generated candidates rather than update the matrix in a direction predicted to improve decoding performance.

A recent line of work has therefore started to introduce gradient-based optimization into the code-design problem. Differentiable factor-graph optimization for BP decoding was proposed in~\cite{Choukroun2024}, and gradient-quantized learning of linear block codes was later developed in~\cite{Dufrene2025}. These approaches are promising because gradients provide directional information in an otherwise discrete search space. Nevertheless, they still do not fully exploit the relaxed Tanner graph as a performance-evaluation object. In particular, the final performance estimate is obtained for integer-valued matrices, while the relaxed representation mainly serves as an optimization device that must eventually be projected or quantized. Moreover, the training objective is typically evaluated using sampled log-likelihood ratios (LLRs) and decoder simulations. This makes the optimization noisy and computationally expensive, especially for long codes, where the number of trainable graph parameters is large and many samples are needed to obtain a stable training signal.

In this paper, we propose a method that addresses these issues by replacing random combinatorial modifications of the protograph with directed gradient-based updates. Instead of testing matrix perturbations one by one, the proposed method computes a differentiable performance estimate and uses its gradient to move many protograph entries simultaneously in directions that improve the predicted decoding performance. To make this possible, we introduce a new LDPC-code ensemble specified by a relaxed protograph matrix. Each entry of the relaxed protograph is interpreted as the Bernoulli probability that the corresponding edge is present. Therefore, the relaxed protograph is not only an auxiliary continuous matrix, but a compact representation of an ensemble of binary protographs. For this ensemble, we develop a new density-evolution procedure over quantized LLR distributions and use the resulting predicted bit error rate as a differentiable optimization loss.

This construction has several advantages. First, the forward pass does not rely on Monte-Carlo sampling of decoder inputs; instead, the channel is represented by its LLR distribution on a quantization grid. Second, gradients are computed through the relaxed protograph itself, so a single update can modify many edge probabilities simultaneously. Third, after training, the optimized relaxed matrices observed in our experiments contain only a small number of non-integer entries, which makes it possible to enumerate the remaining binary protographs and evaluate the final candidates explicitly. In our numerical study, we focus on normalized min-sum decoding and experiments aligned with the 5G standardized LDPC code parameters~\cite{TS38212}; we also show that the proposed training procedure converges quickly and yields competitive results when optimized from a random initialization without additional constraints in the loss function. The same framework can be adapted to other message-passing rules and channel models by replacing the corresponding density-evolution updates and input LLR distributions.

The rest of the paper is organized as follows. Section~\ref{sec:preliminaries} reviews protograph-based QC-LDPC codes and the message-passing decoder used throughout the paper. Section~\ref{sec:qde} presents the proposed quantized density-evolution framework, states the main DE assumptions, introduces the relaxed Bernoulli-edge protograph representation, and describes the gradient-based training procedure. Section~\ref{sec:num_results} gives the experimental setup, compares the optimized protographs with the corresponding 5G LDPC baselines, and discusses the training process and convergence speed. Section~\ref{sec:conclusion} concludes the paper, and the appendices collect the technical details of the PMF representation, the binary-protograph DE rules, and the relaxed-matrix DE derivation.

\emph{Notation}. Throughout this paper, we use capital bold symbols, such as $\mathbf{H}$, to denote matrices. Lowercase bold symbols denote vectors, such as the row or column vector $\boldsymbol{\omega}$, depending on the context. Probability mass functions (PMFs) are denoted by $\mathbf{p}$. We define the integer index set $[n]=\{1,\ldots,n\}$. Calligraphic letters, such as $\mathcal{H}$, denote sets. The symbol $\nabla$ denotes a gradient. We also write $\overline{x}=1-x$.
\section{Preliminaries}
\label{sec:preliminaries}

\subsection{Quasi-cyclic LDPC codes}
LDPC codes are linear block codes defined by a sparse parity-check matrix. In protograph-based constructions~\cite{Thorpe2003}, one first specifies a small binary base matrix $\mathbf{H}\in\{0,1\}^{m\times n}$, where rows correspond to check-node types and columns correspond to variable-node types. Throughout the paper, $\mathbf{H}$ denotes the protograph (base) matrix, whereas $\mathbf{H}^{(Z)}$ denotes the lifted finite-length parity-check matrix. A finite-length Tanner graph is obtained by lifting the base graph: each nonzero entry of $\mathbf{H}$ is replaced by a $Z\times Z$ permutation matrix, while each zero entry is replaced by a $Z\times Z$ all-zero matrix. If the permutation matrices are circulant shifts, the resulting code is quasi-cyclic (QC). QC-LDPC codes are attractive because the lifted parity-check matrix has a compact representation and enables efficient encoder and decoder implementations.

The 5G LDPC codes are standardized examples of lifted protograph-based QC-LDPC codes~\cite{TS38212}. Their base graphs contain transmitted and punctured variable nodes, and different rates are obtained by selecting columns and lifting factors. \Fig{fig:5g_basegraph} shows the 5G base-graph structure used later for comparison.

\begin{figure}
\centering
\includegraphics{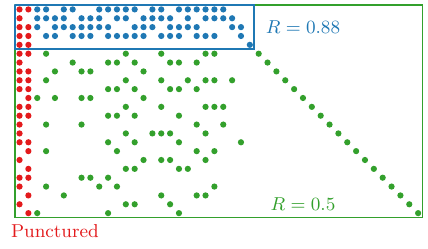}
\caption{5G base graph 1 for the longest codes with rates $\nicefrac{1}{2}$ and $\nicefrac{22}{25}$ ($0.88$). Nonzero entries are shown by marks.}
\label{fig:5g_basegraph}
\end{figure}

\subsection{Belief-propagation iterative decoder}

\begin{figure}
\includegraphics{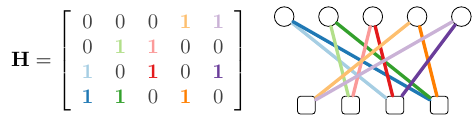}
\caption{Parity-check matrix and corresponding Tanner graph. Squares denote check nodes and circles denote variable nodes.}
\label{fig:tanner_from_H}
\end{figure}

Let $\mathbf{H}^{(Z)}\in\{0,1\}^{mZ\times nZ}$ be a lifted parity-check matrix obtained from the protograph matrix $\mathbf{H}$. Its Tanner graph~\cite{Tanner1981} is bipartite: variable node $i\in[nZ]$ represents code bit $x_i$, check node $j\in[mZ]$ represents the $j$-th parity equation, and an edge $(i,j)$ exists if and only if $H^{(Z)}_{ji}=1$; see \Fig{fig:tanner_from_H}. Let $\mathfrak{M}(i)$ be the set of check nodes adjacent to variable node $i$, and let $\mathfrak{N}(j)$ be the set of variable nodes adjacent to check node $j$.

BP is a message-passing decoder over this graph. The channel observation for bit $i$ is summarized by the input LLR $L_i$. We denote variable-to-check messages by $Q_{i\to j}$ and check-to-variable messages by $R_{j\to i}$. The decoder is initialized by assigning the channel LLR to every outgoing variable-node edge:
\begin{equation}
\label{eq:bp_initialization}
Q_{i\to j}^{(0)}=L_i, \qquad j\in\mathfrak{M}(i).
\end{equation}
Each iteration then alternates check-node and variable-node updates.

\paragraph{Check-node operation}
For sum-product BP, check node $j$ sends to variable node $i$ the extrinsic message
\begin{equation}
R_{j \to i} = 2 \arctanh\left(\prod_{i' \in \mathfrak{N}(j), i' \neq i}\tanh\left(\frac{Q_{i' \to j}}{2}\right)\right).
\label{eq:bp_check_rule}
\end{equation}
This rule uses all incoming variable-to-check messages except the one arriving from the target variable node. In hardware-oriented implementations, the min-sum approximation is often used:
\begin{equation}
R_{j \to i} = \alpha \prod_{i' \in \mathfrak{N}(j), i' \neq i}\sign\left(Q_{i' \to j}\right) \min_{i' \in \mathfrak{N}(j), i' \neq i}{\left| Q_{i' \to j}\right|},
\label{eq:ms_check_rule}
\end{equation}
where $\alpha$ is a scaling parameter. This scaled version is known as normalized min-sum (NMS).

\paragraph{Variable-node operation}
The variable-node update is the same for sum-product and min-sum decoding. Variable node $i$ sends to check node $j$ the channel LLR plus all incoming check-to-variable messages except the one from $j$:
\begin{equation}
Q_{i \to j} = L_i + \sum_{j' \in \mathfrak{M}(i), j' \neq j} R_{j' \to i}.
\label{eq:bp_variable_rule}
\end{equation}
After the last iteration, the a posteriori output LLR for bit $i$ is
\begin{equation}
\label{eq:bp_output_llr}
L_i^{\mathrm{out}} = L_i + \sum_{j\in\mathfrak{M}(i)} R_{j\to i}.
\end{equation}
A hard decision is then made from the sign of $L_i^{\mathrm{out}}$. The density-evolution procedure in Section~\ref{sec:qde} follows exactly these scalar update rules, but replaces every scalar LLR message by its probability distribution.
\section{Gradient Descent with a Relaxed Protograph}
\label{sec:qde}

\subsection{Protograph ensemble}
We first define the relaxed protograph-based LDPC ensemble represented by a single relaxed protograph matrix.

\begin{definition}[Relaxed protograph-based LDPC ensemble]
Let
\[
\boldsymbol{\Omega}=(\omega_{ij})\in[0,1]^{m\times n}
\]
be a relaxed protograph matrix, and let \(Z\) be a lifting factor. The relaxed protograph-based LDPC ensemble associated with \(\boldsymbol{\Omega}\) and \(Z\), denoted by
\[
\mathcal H^{(Z)}(\boldsymbol{\Omega}),
\]
is the ensemble of binary parity-check matrices
\[
\mathbf H^{(Z)}\in\{0,1\}^{mZ\times nZ}
\]
partitioned into \(m\times n\) blocks
\[
\mathbf H^{(Z)} =
\begin{pmatrix}
\mathbf H_{11} & \cdots & \mathbf H_{1n}\\
\vdots & \ddots & \vdots\\
\mathbf H_{m1} & \cdots & \mathbf H_{mn}
\end{pmatrix},
\qquad
\mathbf H_{ij}\in\{0,1\}^{Z\times Z}.
\]
Each block \(\mathbf H_{ij}\) is chosen in the following way:
\[
\mathbf H_{ij} = \mathbf P_{ij}\odot \mathbf T_{ij},
\]
where \(\mathbf P_{ij}\) is chosen uniformly at random from the set of all $Z \times Z$ permutation matrices, \(\odot\) denotes the element-wise product, and $\mathbf T_{ij}$ is a $Z\times Z$ binary random matrix with independent Bernoulli entries of parameter $\omega_{ij}$, independent of $\mathbf P_{ij}$, i.e. each nonzero entry of \(\mathbf P_{ij}\) is independently retained with probability \(\omega_{ij}\).
\end{definition}

Thus, the entry $\omega_{ij}$ of the relaxed protograph matrix is interpreted as the edge-retention probability in the $Z\times Z$ permutation block associated with check-node type $i$ and variable-node type $j$. In other words, instead of using a full permutation matrix, we consider a random subset of its support, where each nonzero entry is retained independently with probability $\omega_{ij}$. Therefore, when $\boldsymbol{\Omega}\in\{0,1\}^{m\times n}$, the construction reduces to the standard single-edge protograph-based LDPC ensemble.

Density evolution is performed for the asymptotic sequence of ensembles
\[
\left\{\mathcal H_Z(\boldsymbol{\Omega})\right\}_{Z\ge 1}
\]
in the limit \(Z\to\infty\). In this asymptotic regime, for any fixed number of decoding iterations, the lifted Tanner graph is locally tree-like with high probability, which justifies tracking the evolution of message distributions on the corresponding relaxed protograph.

To optimize the protograph structure, we use DE over the relaxed protograph $\boldsymbol{\Omega}$ as the loss function. The first objective is to find an ensemble with low DE-predicted BER.

The second objective is to reduce the ensemble size as much as possible. Each non-integer entry $\omega_{ij} \in (0, 1)$ doubles the total number of binary protographs. As reported in the numerical results, this reduction occurs automatically in our experiments.

\subsection{DE over relaxed protograph}

Consider the deterministic DE estimator used by the optimizer. The scalar NMS decoder from Section~\ref{sec:preliminaries} is replaced by an equivalent distributional recursion: every LLR message is represented by PMF, and each variable-node or check-node operation maps input PMFs to an output PMF. The same framework is then extended from a binary protograph matrix to a relaxed protograph.

We use the following assumptions throughout the DE procedure.
\begin{itemize}
\item LLR distributions are clipped and represented on the regular grid (see Appendix~\ref{app:de_bin_matrix}). As a result, each distribution is represented by a PMF vector $\mathbf{p}$.
\item Incoming messages at a protograph node are treated as independent random variables, as in standard density evolution. This is the usual tree-like-neighborhood assumption for the lifted Tanner graph.
\item The DE recursion follows the same flooding NMS update rules as the target decoder. The variable-node operation $f_v$ is distributional LLR addition, while the check-node operation $f_c$ is the distributional version of the normalized min-sum~\eqref{eq:ms_check_rule} or sum-product~\eqref{eq:bp_check_rule}. Given a quantized LLR representation, $f_c$ and $f_v$ are vector-valued functions of multiple vector inputs.
\item The operations $f_v$ and $f_c$ are evaluated in extrinsic form: the message sent along an edge is computed from all incoming messages except the one arriving along that edge. Associativity allows the outgoing messages of a node to be computed efficiently by prefix--suffix processing.
\item Punctured variable-node types are initialized by the zero-LLR PMF, while transmitted variable-node types are initialized by the channel PMF under a zero-codeword assumption.
\end{itemize}

To proceed with a relaxed protograph, let us introduce two neutral distributions. The distribution $\mathbf{p}_\infty$ has all probability concentrated at the maximum LLR value $L_c$ and is neutral for the min-sum check-node operation $f_c$. The distribution $\mathbf{p}_\delta$ has all probability concentrated at the zero LLR and is neutral for the variable-node operation $f_v$.

Let $\boldsymbol{\omega}=(\omega_1,\ldots,\omega_D)$ denote a local relaxed neighborhood rather than a whole protograph matrix. For a check-node update, $\boldsymbol{\omega}$ is a row of $\boldsymbol{\Omega}$ and $D=n$. For a variable-node update, $\boldsymbol{\omega}$ is a column of $\boldsymbol{\Omega}$ and $D=m$. Thus, $\omega_k$ always denotes the $k$-th entry of the currently considered local row or column; the global row or column index is suppressed only to keep the notation readable.

We first state the check-node case. For a relaxed input edge indicator $I\in\{0,1\}$, define
\begin{equation}
\label{eq:check_relaxed_input}
\hat{\mathbf{p}}_{(k,I)}=
\begin{cases}
\mathbf{p}_k, & I=1,\\
\mathbf{p}_\infty, & I=0,
\end{cases}
\quad
\hat{\mathbf{p}}_k=\omega_k\mathbf{p}_k+\overline{\omega}_k\mathbf{p}_\infty,
\end{equation}
where $\overline{\omega}_k = 1-\omega_k$. The absent edge is replaced by $\mathbf{p}_\infty$ because
\begin{equation}
\label{eq:check_neutral}
f_c(\mathbf{p}_\infty,\mathbf{p}_1,\ldots)=f_c(\mathbf{p}_1,\ldots).
\end{equation}
Since the check-node update is linear with respect to any input argument (see Appendix~\ref{app:de_bin_matrix} and~\eqref{eq:fcn_random}), the following property holds.
\begin{equation}
\label{eq:fc_collapse}
\omega_1f_c(\mathbf{p}_1,\mathbf{p}_2,\ldots) + \overline{\omega}_1f_c(\mathbf{p}_\infty,\mathbf{p}_2,\ldots) = f_c(\hat{\mathbf{p}}_1,\mathbf{p}_2,\ldots)
\end{equation}
Using the total-probability formula, the marginal distribution of the $k$-th outgoing message can be written as an explicit average over all configurations of the other relaxed edges:
\begin{equation}
\label{eq:mixture_full}
\tilde{\mathbf{p}}_k
=\omega_k\sum_{\boldsymbol{\iota}\in\{0,1\}^{D-1}}
\mathbb{P}\{\boldsymbol{\iota}\}
 f_c\left(\mathcal{L}_{k}(\boldsymbol{\iota}),
 \mathcal{R}_{k}(\boldsymbol{\iota})\right)
+\overline{\omega}_k\mathbf{p}_\delta,
\end{equation}
where $\boldsymbol{\iota}=(i_1,\ldots,i_{k-1},i_{k+1},\ldots,i_D)$ lists the Bernoulli states of all non-target edges,
\begin{equation}
\label{eq:mixture_sets}
\begin{aligned}
\mathcal{L}_{k}(\boldsymbol{\iota})
&=\{\hat{\mathbf{p}}_{(1,i_1)},\ldots,
\hat{\mathbf{p}}_{(k-1,i_{k-1})}\},\\
\mathcal{R}_{k}(\boldsymbol{\iota})
&=\{\hat{\mathbf{p}}_{(k+1,i_{k+1})},\ldots,
\hat{\mathbf{p}}_{(D,i_D)}\},
\end{aligned}
\end{equation}
and
\begin{equation}
\label{eq:mixture_probability}
\mathbb{P}\{\boldsymbol{\iota}\}
=\prod_{j\in[D]\setminus\{k\}}\omega_j^{i_j}\left(\overline{\omega}_j\right)^{1-i_j}.
\end{equation}
The leading factor $\omega_k$ in~\eqref{eq:mixture_full} accounts for the target edge itself: if this edge is absent, the outgoing message is the non-informative distribution $\mathbf{p}_\delta$.

The full average in~\eqref{eq:mixture_full} contains $2^{D-1}$ terms. However, because $f_c$ is multilinear with respect to independent input PMF vectors and $\mathbf{p}_\infty$ is its neutral element, this average can be pushed into the inputs. Combining~\eqref{eq:check_relaxed_input},~\eqref{eq:check_neutral},~\eqref{eq:fc_collapse}, and~\eqref{eq:mixture_full} gives
\begin{equation}
\label{eq:mixture_reduced}
\tilde{\mathbf{p}}_k
=\omega_k f_c(\hat{\mathbf{p}}_1,\ldots,\hat{\mathbf{p}}_{k-1},
\hat{\mathbf{p}}_{k+1},\ldots,\hat{\mathbf{p}}_D)
+\overline{\omega}_k\mathbf{p}_\delta .
\end{equation}
Thus, the relaxed check-node update is exactly the marginal output over the local Bernoulli ensemble, but it avoids explicit enumeration.

The variable-node case uses the same local notation, but now $\boldsymbol{\omega}$ is a column of $\boldsymbol{\Omega}$ and $D=m$. The neutral distribution for an absent incoming check-to-variable message is $\mathbf{p}_\delta$. Hence, each non-target input PMF is replaced by the affine mixture
\begin{equation}
\label{eq:variable_relaxed_input}
\check{\mathbf{p}}_j=\omega_j\mathbf{p}_j+\overline{\omega}_j\mathbf{p}_\delta .
\end{equation}
The extrinsic variable-to-check message sent along the $k$-th local edge is then
\begin{equation}
\label{eq:variable_relaxed_update}
\tilde{\mathbf{p}}_k
=\mathbf{p}_{\mathrm{ch}}\circledast
f_v(\check{\mathbf{p}}_1,\ldots,\check{\mathbf{p}}_{k-1},
\check{\mathbf{p}}_{k+1},\ldots,\check{\mathbf{p}}_D),
\end{equation}
where $\mathbf{p}_{\mathrm{ch}}$ is the channel PMF of the considered variable node. Unlike in the check-node update, no leading factor $\omega_k$ is needed: if the target edge is absent, this variable-to-check message is simply not consumed by the next check-node update.

To evaluate the DE-predicted bit error rate (BER), we explicitly construct the output LLR distribution for a variable node after several DE iterations, omitting the variable-node index for simplicity.
\begin{equation}
\label{eq:ber_out_relaxed}
\mathbf{p}^{\mathrm{out}} = \mathbf{p}_{\mathrm{ch}}\circledast \tilde{\mathbf{p}}_1 \circledast \ldots \circledast \tilde{\mathbf{p}}_m,
\end{equation}
where $\tilde{\mathbf{p}}_i$ are the check-node update outputs given by~\eqref{eq:mixture_reduced}. This formula is the DE equivalent of~\eqref{eq:bp_output_llr}.
\begin{theorem}
Consider a relaxed protograph $\boldsymbol{\Omega}$. Under the DE assumptions stated above and for the check-node rule, the relaxed DE recursion obtained by the affine substitutions in~\eqref{eq:check_relaxed_input}, \eqref{eq:mixture_reduced}, \eqref{eq:variable_relaxed_input}, and~\eqref{eq:variable_relaxed_update} produces, at every iteration and for every directed edge, the marginal PMF of the corresponding message averaged over the binary protograph ensemble induced by $\boldsymbol{\Omega}$. Consequently, the final DE BER computed from the relaxed recursion equals the ensemble-averaged DE BER of that Bernoulli protograph ensemble.
\end{theorem}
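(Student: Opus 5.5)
The plan is induction on the iteration index $\ell$, carried on the DE computation tree of depth $\ell$ rooted at a directed edge of the relaxed protograph. The inductive invariant is as follows. For every directed edge type (check type $i$ $\to$ variable type $j$, or the reverse), the relaxed recursion returns the PMF $\mathbb{E}_{\mathbf{T}}[\mathbf{p}^{(\ell)}(\mathbf{T})]$. Here $\mathbf{T}$ is the collection of independent Bernoulli edge indicators on the tree, and $\mathbf{p}^{(\ell)}(\mathbf{T})$ is the message PMF that ordinary binary DE would produce on the tree realized by $\mathbf{T}$.

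Three facts are needed. \emph{(i)} In the limit $Z\to\infty$ the depth-$2\ell$ neighborhood is tree-like, so distinct subtrees hanging off a node involve disjoint sets of Bernoulli indicators. These are therefore independent, both of each other and of the indicator of the target edge. \emph{(ii)} Under the DE independence assumption, $f_c$ and $f_v$ (the convolution $\circledast$) are multilinear in their PMF arguments, since each is a pushforward of a product measure; this is the property \eqref{eq:fc_collapse}. \emph{(iii)} An absent edge contributes exactly the neutral input: $\mathbf{p}_\infty$ for $f_c$ by \eqref{eq:check_neutral}, and $\mathbf{p}_\delta$ for $f_v$.

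The base case is $\ell=0$. Here the messages are $\mathbf{p}_{\mathrm{ch}}$ or the zero-LLR PMF, which do not depend on $\mathbf{T}$. For the check-node step, fix target edge $k$ and condition on its indicator. If the edge is absent, the delivered message is $\mathbf{p}_\delta$, which gives the term $\overline{\omega}_k\mathbf{p}_\delta$. If it is present, the output is $f_c$ applied to the inputs $\hat{\mathbf{p}}_{(j,I_j)}$, where the incoming PMFs themselves depend on the indicators deeper in the subtree. By independence (i), we can take the expectation first over the deeper indicators of each branch. By multilinearity this moves into each argument, and the induction hypothesis turns each argument into the averaged PMF $\mathbf{p}_j$. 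We then average over the sibling indicators $I_j$, using multilinearity once more to replace each argument by $\omega_j\mathbf{p}_j+\overline{\omega}_j\mathbf{p}_\infty=\hat{\mathbf{p}}_j$. This turns \eqref{eq:mixture_full} into \eqref{eq:mixture_reduced}. The variable-node step is identical, with $f_v$, neutral element $\mathbf{p}_\delta$, and $\check{\mathbf{p}}_j$ from \eqref{eq:variable_relaxed_input}. The only difference is that we condition on the target edge being present. This matches the convention that the message is consumed only along present edges: the factor $\omega_k$ is accounted for once, in the next check-node update (or in \eqref{eq:ber_out_relaxed}).

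The BER step applies the same argument to \eqref{eq:ber_out_relaxed}. Each $\tilde{\mathbf{p}}_i$ is already the average including the presence of edge $i$, and the $m$ branches are independent, so multilinearity of $\circledast$ gives $\mathbf{p}^{\mathrm{out}}=\mathbb{E}_{\mathbf{T}}[\mathbf{p}^{\mathrm{out}}(\mathbf{T})]$. The BER is a linear functional of the PMF, namely the mass on negative LLRs plus half the mass at zero. It therefore commutes with the expectation, which yields the ensemble-averaged DE BER.

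I expect the main obstacle to be the interchange of expectation and $f$. That interchange requires that the arguments be \emph{jointly} independent random PMFs, not merely that each has the correct marginal. Multilinearity gives $\mathbb{E}f(X_1,\dots)=f(\mathbb{E}X_1,\dots)$ only when the $X_j$ are independent. I would therefore state the tree-like property carefully: the indicator sets of sibling subtrees are disjoint, which justifies factorization of the joint law. Multi-edge repeats within a single protograph entry cannot occur, because each entry carries a single permutation block. A subtlety to check is that averaging the indicators of the target node's own edges against their use in the parent update is not double-counted. This is handled by assigning each indicator to the unique edge of the tree it labels, and averaging it exactly once, at the node where that edge enters as a non-target input or as the target of a check-to-variable message.
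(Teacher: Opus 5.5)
Your proposal follows the paper's proof: induction over DE iterations, conditioning on the Bernoulli edge indicators to get \eqref{eq:mixture_full}, and collapsing it to \eqref{eq:mixture_reduced} and its variable-node analogue using multilinearity of $f_c$, $f_v$ and the neutral PMFs $\mathbf{p}_\infty$, $\mathbf{p}_\delta$; your tree-like argument that sibling branches carry disjoint, hence jointly independent, indicator sets supplies a justification the paper leaves implicit. On the double-counting point you raise, fix the rule explicitly in the variable-node step: \eqref{eq:mixture_reduced} already contains the factor $\omega_k$ of the check-to-variable edge, so the substitution \eqref{eq:variable_relaxed_input} must act on the present-edge part $f_c(\hat{\mathbf{p}}_1,\ldots,\hat{\mathbf{p}}_{k-1},\hat{\mathbf{p}}_{k+1},\ldots,\hat{\mathbf{p}}_D)$ rather than on $\tilde{\mathbf{p}}_k$ itself (equivalently, feed $\tilde{\mathbf{p}}_k$ in directly, as \eqref{eq:ber_out_relaxed} does), since otherwise that indicator is weighted by $\omega_k^2$ and your variable step as written would not give the ensemble marginal.
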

\begin{IEEEproof}
The proof follows by induction over DE iterations. At initialization, the statement is true because transmitted and punctured variable nodes are assigned deterministic channel PMFs independent of the relaxed edge indicators. Assume it holds for all incoming messages to a node at a given iteration. For a check node, conditioning on each Bernoulli edge indicator gives the explicit average~\eqref{eq:mixture_full}. Since absent non-target edges are replaced by the neutral PMF $\mathbf{p}_{\infty}$ and the check-node PMF update is multilinear in the independent input PMFs, this average collapses to~\eqref{eq:mixture_reduced}. The same argument applies to variable nodes, where absent incoming messages are replaced by the neutral PMF $\mathbf{p}_{\delta}$ and the variable-node update is linear with respect to convolution of independent PMFs. Thus, each relaxed update gives the correct marginal ensemble average, and the claim follows for all iterations and for the final a posteriori PMFs.
\end{IEEEproof}

\begin{remark}The construction above is specific to the min-sum family of check-node rules. Although the sum-product check-node update is still linear in the input PMFs, as soon as a general function~\eqref{eq:fcn_random} is, the operation on a regular grid is no longer possible. Hence, the subsequent interpolation may break this linearity. Therefore, the relaxed DE recursion used in this work is tied to NMS decoding.
\end{remark}

\subsection{Loss function}
We now write the loss function explicitly. If $\mathbf{p}_{j}^{\mathrm{out}}(\boldsymbol{\Omega})$ is the final a posteriori PMF (see~\eqref{eq:ber_out_relaxed}) for variable-node type $j$, then the BER is given by the non-positive tails of all output LLR distributions:
\begin{equation}
\label{eq:de_predicted_ber}
P_{\mathrm b}^{\mathrm{DE}}(\boldsymbol{\Omega})
=\frac{1}{n}\sum_{j=1}^{n}
\left(
\sum(\mathbf{p}_{j}^{\mathrm{out}})^-
+(\mathbf{p}_{j}^{\mathrm{out}})^0
\right),
\end{equation}
where $(\cdot)^-$ and $(\cdot)^0$ follow the notation from~\eqref{eq:sign_split_tails}.
For a given $\boldsymbol{\Omega}$, we run the relaxed density-evolution recursion for a fixed SNR, a fixed number of flooding NMS iterations, and the same puncturing pattern as in the target base graph.

In the implementation, we minimize $\log P_{\mathrm b}^{\mathrm{DE}}(\boldsymbol{\Omega})$ for numerical stability. This criterion averages over all variable-node types of the protograph. We deliberately do not search for the best information set among the lifted circulants; this keeps the optimization target independent of a particular finite-length lifting and makes the task a pure protograph-design problem.

\subsection{Training the protograph}
\label{subsec:soft_gradient_optimization}

The relaxed DE equations are directly differentiable by automatic differentiation. The relaxed coefficients enter the check-node and variable-node updates only through the affine PMF mixtures derived above. The remaining operations are differentiable tensor operations on PMF vectors: FFT-based convolutions for variable-node sums, sign-split tail products for min-sum check nodes, and linear remapping for the NMS scaling. Therefore, backpropagation through the unrolled density-evolution iterations gives the gradient $\nabla_{\boldsymbol{\Omega}}\log P_{\mathrm b}^{\mathrm{DE}}(\boldsymbol{\Omega})$ without Monte Carlo simulation.

After each gradient step, the relaxed matrix is projected back to the box constraints
\begin{equation}
\label{eq:soft_gradient_update}
\boldsymbol{\Omega}^{(t+1)}=
\Pi_{[0,1]}
\left(
\boldsymbol{\Omega}^{(t)}-\nu_t
\nabla_{\boldsymbol{\Omega}}\log P_{\mathrm b}^{\mathrm{DE}}(\boldsymbol{\Omega}^{(t)})
\right),
\end{equation}
where $\nu_t$ is the learning rate and $\Pi_{[0,1]}$ clips each entry of $\boldsymbol{\Omega}$ to $[0,1]$. In the experiments, if the DE BER drops below a prescribed threshold, the signal-to-noise ratio (SNR) is decreased and no gradient step is applied for that epoch; otherwise, a projected gradient step is performed. This strategy helps avoid numerical issues at BER levels below floating-point precision. A single-shot optimization is used: the procedure starts from one initial protograph and applies standard gradient descent until $\boldsymbol{\Omega}$ converges or the target DE BER is reached; in the latter case, the training SNR is decreased.

This method allows us to use standard machine-learning tools and provides a deterministic, low-variance training signal. Its possible drawback is that the objective is an ensemble-averaged DE prediction over a relaxed family of protographs. This average includes many equivalent binary matrices, since protographs that differ only by row or column permutations represent the same ensemble. As a result, convergence can be slow and non-integer local minima are possible. After convergence, we find the best-performing protograph, according to the DE metric, using exhaustive search. The resulting protograph is then lifted, and the performance of the resulting LDPC code is numerically evaluated. Nevertheless, numerical results show that the proposed method is robust to these potential risks.

\section{Numerical results}
\label{sec:num_results}
\subsection{Experimental setup}
We evaluate the proposed optimization method over the binary-input AWGN channel with BPSK modulation. The experimental cases listed in Table~\ref{tab:experimental_cases} use the longest 5G LDPC codes based on base graph 1 as references. For each target rate, the optimized protograph has the same effective protograph size and the same number of punctured variable-node types as the corresponding 5G construction. Therefore, the comparison isolates the effect of the optimized edge pattern rather than changes in rate, lifting size, or puncturing.

\begin{table}[t]
\centering
\caption{Code-design cases used in the experiments.}
\label{tab:experimental_cases}
\footnotesize
\begin{tabular}{@{}lccccc@{}}
\toprule
Case & Rate & $(k,n)$ & BG size & Factor $Z$ & Punct.\\
\midrule
BG1-medium & $1/2$ & $(8448,16896)$ & $24\times 46$ & $384$ & $2Z$\\
BG1-high & $0.88$ & $(8448,9600)$ & $5\times 27$ & $384$ & $2Z$\\
\bottomrule
\end{tabular}
\end{table}

\begin{table}[t]
\centering
\caption{Density-evolution and training parameters.}
\label{tab:training_parameters}
\begin{tabular}{@{}p{0.35\columnwidth}p{0.57\columnwidth}@{}}
\toprule
Parameter & Value \\
\midrule
LDPC decoder & NLMS~\eqref{eq:ms_check_rule}, $\alpha = 0.75$ \\
Channel model & AWGN with BPSK \\
LLR grid & $L_c=50$, $2001$ points ($N=1000$) \\
Decoder in DE & Flooding normalized min-sum \\
NMS scaling coefficient & $0.75$ \\
DE iterations (flooding) & $60$ \\
Optimization method & Projected gradient descent \\
Learning-rate policy & Fixed learning rate $\nu_t = 5\times10^{-3}$\\
Constraint handling & Hard clipping to $[0,1]$ \\
Initialization & Single-shot from one initial protograph \\
Implementation & PyTorch automatic differentiation \\
\bottomrule
\end{tabular}
\end{table}

For the high-rate case, the raw relaxed optimization may converge to a local structure containing a weight-one row and a weight-one column connected to a punctured variable node. This structure preserves the rate but reduces the effective number of trainable parameters. We therefore remove this degeneracy manually, obtaining an equivalent reduced protograph with the same coding rate. For the rate $R=\nicefrac{1}{2}$ case, the single-shot training result is used directly.

Block error rate (BLER) simulations are computed over the information bits for the 5G reference codes and over the whole codeword for the optimized codes. The latter choice follows the protograph-level optimization setup described in Section~\ref{subsec:soft_gradient_optimization}.

\subsection{Trained LDPC decoding performance}
The base graphs are trained from scratch using the flooding NMS density-evolution objective and the fixed number of decoding iterations listed in Table~\ref{tab:training_parameters}. To evaluate whether the optimized edge patterns remain useful beyond the training setup, we simulate each optimized protograph in three decoding regimes: layered normalized min-sum (NLMS) decoding with 15 iterations, layered NLMS decoding with 50 iterations, and sum-product flooding algorithm (SPFA) with 50 iterations. Here, NLMS denotes the layered implementation of normalized min-sum commonly used in practical LDPC decoders; it is used only for finite-length simulations, whereas the training objective is based on flooding DE. The first regime tests fast-convergence behavior, the second is closest to the optimization target, and the third provides a higher-complexity reference point.

The simulation results are shown in~\Fig{fig:r50} for the BG1-medium case and in~\Fig{fig:r88} for the BG1-high case. The corresponding optimized relaxed protographs are shown in~\Fig{fig:optimized_ensemble_r50} and~\Fig{fig:optimized_ensemble_r88}, respectively. The resulting ensembles contain at most four binary protographs, so exhaustive selection of the best-performing protograph in each ensemble is feasible. We used ACE~\cite{Tian2004} to lift our optimized protographs.

\begin{remark}[On non-integer-valued protograph elements]
A non-integer entry of the optimized protograph can be interpreted directly during lifting. For the protograph in~\Fig{fig:optimized_ensemble_r50}, each integer nonzero entry is lifted to a circulant permutation matrix, while each entry $\omega_{ij}\in(0,1)$ is lifted to a partial circulant in which each edge is retained with probability $\omega_{ij}$. This direct randomized lifting gives a performance curve close to those of the best binary protographs obtained by exhaustive enumeration, which is also consistent with the DE metrics. This observation shows that the non-integer entries have a meaningful ensemble interpretation.
\end{remark}

\begin{remark}[On the stationary optimum]
The ensemble-averaged metric, such as the DE-predicted BER, can be slightly better than the metric of any individual binary protograph in the optimized ensemble. This occurs for the ensemble in~\Fig{fig:optimized_ensemble_r50}. However, the DE gap is very small, and the corresponding difference in simulated performance is negligible (both differences are within $10^{-3}$ dB).
\end{remark}

\begin{figure}
    \centering
    \includegraphics{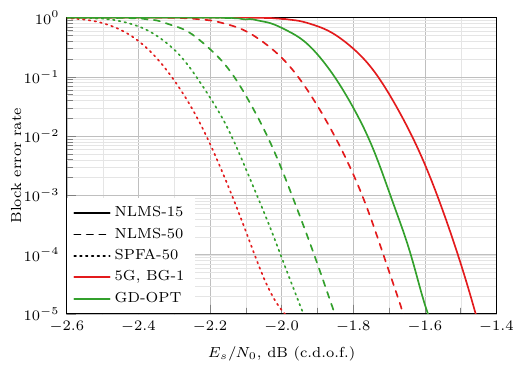}
    \caption{Block error rate performance for the BG1-medium case with rate $R=1/2$ and lifting factor $Z=384$. Solid, dashed, and dotted curves correspond to layered normalized min-sum decoding with $\alpha=0.75$ and 15 iterations, layered normalized min-sum decoding with $\alpha=0.75$ and 50 iterations, and flooding sum-product decoding with 50 iterations, respectively. Red curves show the 5G BG1 reference code, and green curves show the gradient-descent-optimized protograph.}
    \label{fig:r50}
\end{figure}

\begin{figure}
    \centering
    \includegraphics{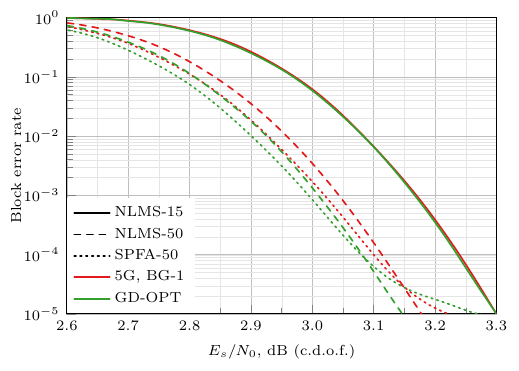}
    \caption{Block error rate performance for the BG1-high case with rate $R=0.88$ and lifting factor $Z=384$. The line styles and colors have the same meaning as in~\Fig{fig:r50}: solid, dashed, and dotted curves denote NLMS-15, NLMS-50, and SPFA-50 decoding, while red and green denote the 5G BG1 reference code and the gradient-descent-optimized protograph, respectively.}
    \label{fig:r88}
\end{figure}

The main observation is that, in both cases, the optimized protographs outperform the corresponding 5G LDPC reference codes. Therefore, the potential practical difficulties discussed earlier, such as averaging over many equivalent matrices or obtaining a large final ensemble, do not prevent useful optimized designs in these experiments. At a block error rate of $10^{-4}$, the observed gain is approximately $0.03$~dB for $R=0.88$ and increases to approximately $0.18$~dB for $R=1/2$. The larger gain in the lower-rate case can be explained by the fact that several constraints imposed on 5G LDPC codes, especially those needed to support rate adaptation, are not included in the present optimization objective.

Overall, these results show that the proposed optimization tool can produce competitive LDPC protographs under decoder-aware objectives. Since the loss function can be adjusted flexibly, the same procedure can be extended to LDPC design settings with additional constraints, including explicit rate-adaptation requirements.

\begin{figure}
    \centering
    \includegraphics{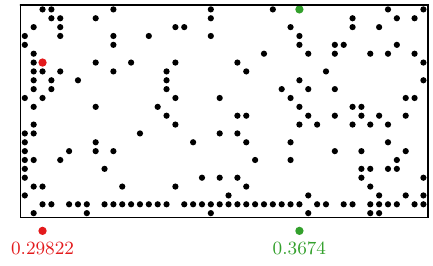}
    \caption{Optimized relaxed protograph for the BG1-medium case with rate $R=1/2$. Black marks indicate integer nonzero entries of the relaxed base matrix. The two colored pairs indicate the remaining non-integer entries, with values $0.3674$ and $0.29822$; these entries define the small final ensemble of binary protographs.}
    \label{fig:optimized_ensemble_r50}
\end{figure}

\begin{figure}
    \centering
    \includegraphics{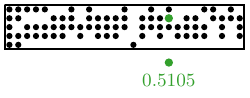}
    \caption{Optimized relaxed protograph for the BG1-high case with rate $R=0.88$. Black marks indicate integer nonzero entries of the relaxed base matrix. The colored pair indicates the only remaining non-integer entry, with value $0.5105$; this entry defines the final two-member binary ensemble. The optimized puncturing pattern coincides with the 5G BG1 puncturing pattern.}
    \label{fig:optimized_ensemble_r88}
\end{figure}

\subsection{Training process and convergence speed}

\begin{figure}
\centering
\includegraphics[width=\columnwidth]{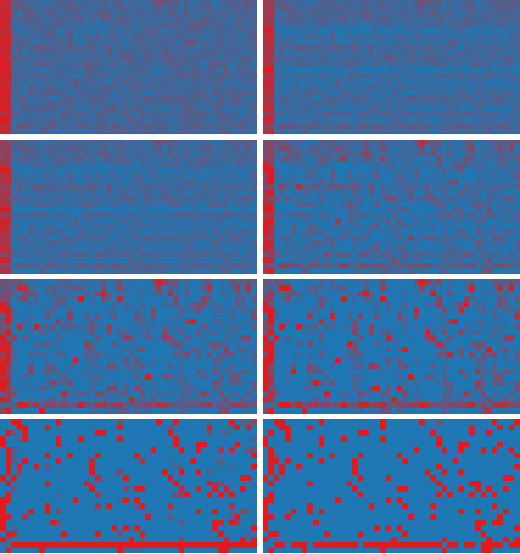}
\caption{Visualization of the protograph training process for the BG1-medium case. Training epochs evolve row by row from left to right. The first row (epochs 100 and 1000) shows a fast adjustment of the average sparsity of punctured and non-punctured columns, illustrating that gradient magnitudes are informative at this stage. Row 2 (epochs 1500 and 2500) shows a plateau in the loss function due to averaging over an extremely large ensemble that includes permutation-equivalent matrices. Row 3 (epochs 3200 and 3500) shows the saturation of different elements and an accelerated decrease in the loss function. Finally, row 4 (epochs 3800 and 4591) shows the final crystallization of the ensemble towards the matrix presented in \Fig{fig:optimized_ensemble_r50}.}
\label{fig:training_progress}
\end{figure}

The raw number of density-evolution evaluations is not the most informative measure of progress for the proposed method. With a sufficiently small learning rate, many gradient-descent epochs update the continuous variables but do not change the associated hard-decision protograph. Instead, we count an effective discrete step only when the binary projection suggested by the gradient changes. Near a stationary binary point, the desired consistency condition can be written as
\begin{equation}
    2\operatorname{bin}\left(\boldsymbol{\Omega}\right) - \mathbf{1} = \operatorname{sign}(\nabla_{\boldsymbol{\Omega}} \mathcal{L}),
\end{equation}
where $\mathbf{H} = \operatorname{bin}\left(\boldsymbol{\Omega}\right)$ is the binary protograph induced by thresholding the relaxed protograph $\boldsymbol{\Omega}$, $\mathbf{1}$ is the all-one matrix of the same size, and $\nabla_{\boldsymbol{\Omega}}\mathcal{L}$ is the gradient of the relaxed objective $\mathcal{L}$ (e.g.,~\eqref{eq:de_predicted_ber}) with respect to the entries of $\boldsymbol{\Omega}$. Thus, the number of epochs in which at least one element of $\operatorname{sign}(\nabla_{\boldsymbol{\Omega}}\mathcal{L})$ changes can be interpreted as the number of effective random-flip steps that the gradient method has explored.

For the BG1-medium case, the protograph contains $24\times46=1104$ entries. The training process converged after $4951$ epochs (the evolution of the protograph is shown in~\Fig{fig:training_progress}), but only $1693$ epochs corresponded to effective changes of the induced binary update direction. This number is comparable to the number of entries in the protograph and is much smaller than the number of possible even weight-two flips, which is on the order of $1104^2$. Moreover, each effective gradient step changed $6.23$ entries on average. A random or population-based optimizer would have to evaluate many such multi-bit perturbations explicitly, whereas the gradient provides a ranked direction over all matrix entries after a single differentiable density-evolution pass.

This comparison illustrates why the proposed method can be substantially more sample-efficient than random search or genetic optimization. The advantage is not only that gradient descent updates several entries at once, but also that the relaxed protograph $\boldsymbol{\Omega}$ has a physically meaningful interpretation: its entries are Bernoulli edge probabilities. Consequently, gradient magnitudes provide useful information about which edges should be promoted or suppressed. This directional information is absent in purely black-box candidate-generation methods, including decoder-in-the-loop genetic search~\cite{Elkelesh2019}, and is the main reason why gradient-based code-design methods are promising for large protograph spaces~\cite{Choukroun2024,Dufrene2025}.

\section{Conclusion and future work}
\label{sec:conclusion}
We proposed a decoder-aware method for optimizing protograph-based LDPC codes using quantized density evolution over a relaxed protograph. The method replaces simulation-based decoder-in-the-loop evaluation by a deterministic DE objective and represents the optimized protograph by a matrix $\boldsymbol{\Omega}$ whose entries are Bernoulli edge probabilities. Thus, the relaxed matrix is both differentiable and directly interpretable as an ensemble of binary protographs.

For normalized min-sum decoding, we showed that the relaxed DE recursion computes the ensemble-averaged DE performance of this Bernoulli protograph ensemble. This gives a low-variance training signal and allows the edge probabilities to be optimized by standard projected gradient descent. In the considered 5G-aligned examples, the resulting relaxed ensembles contained only a small number of non-integer entries, so the final binary candidates could be enumerated explicitly.

The numerical results demonstrate that the optimized protographs can outperform the corresponding 5G LDPC reference codes with the same effective protograph size, lifting factor, rate, and puncturing budget. The observed gains are approximately $0.03$~dB for the high-rate case and $0.18$~dB for the rate-$1/2$ case at a block error rate of $10^{-4}$. The training-process analysis also indicates that gradient information provides an efficient search direction in the large binary protograph space.

Several limitations remain open for future work. First, the DE analysis relies on the standard independence assumption for messages entering protograph nodes; finite lifted graphs, especially those with short cycles, may violate this assumption. Incorporating girth, cycle counts, or other finite-length graph metrics into the relaxed loss is therefore an important direction. Second, the present relaxation is tailored to the min-sum family of check-node rules. For sum-product decoding, a straightforward application of the same construction is no longer available because the useful collapse based on~\eqref{eq:ccdf_ms} does not directly apply; alternative relaxed-DE updates may therefore be needed and may increase computational complexity. Third, the current loss is based on variable-node BER estimates; optimizing FER rather than BER requires a new objective functional that can provide a stable FER-oriented training signal. Finally, the method should be extended to jointly train rate-adaptive protograph families and to combine protograph learning with decoder-parameter learning in the spirit of neural belief-propagation methods~\cite{Nachmani2016}.

\appendices
\section{DE rules for binary matrix}
\label{app:de_bin_matrix}
We use a clipped and uniformly quantized representation, following the quantized-distribution approach in~\cite{Simegn2025}. For completeness, we reproduce the quandized DE presentation from~\cite{Simegn2025} here. Let $L_c>0$ be the clipping level and let $\Delta>0$ be the quantization step. We assume that $L_c=N\Delta$ for some integer $N$ and define the regular LLR grid
\begin{equation}
\label{eq:pdf_support}
\begin{aligned}
\mathcal{L}
&=\left\{x_a=a\Delta:\;a=-N,-N+1,\ldots,N\right\}\\
&=\left[-L_c,-L_c+\Delta,\ldots,0,\ldots,L_c\right].
\end{aligned}
\end{equation}
All LLR values outside $[-L_c,L_c]$ are clipped to the nearest boundary point. Thus, every quantized LLR random variable $\chi$ is represented by a PMF over $\mathcal{L}$,
\begin{equation}
\label{eq:pmf_grid}
p_{\chi}(x)=\sum_{a=-N}^{N}p_a\delta(x-x_a),
\qquad
p_a=\mathbb{P}\{\chi=x_a\},
\end{equation}
where $p_a\ge 0$ and $\sum_{a=-N}^{N}p_a=1$. Its vector form is
\begin{equation}
\label{eq:pmf_vector_equivalence}
\mathbf{p}_{\chi}
=\left(p_{-N},p_{-N+1},\ldots,p_N\right)^{\mathsf T}.
\end{equation}
We use $p_{\chi}(x)$ and $\mathbf{p}_{\chi}$ interchangeably. When the random variable is clear from the context, we simply write $\mathbf{p}$.

Now consider a deterministic function of $D$ independent quantized random variables,
\begin{equation}
\label{eq:function_of_random_variables}
\chi_o=f(\chi_1,\ldots,\chi_D).
\end{equation}
The independence assumption is the same one used in density evolution for messages entering a protograph node. We make this assumption as soon as the protograph is further lifted. The output distribution is obtained by the total-probability rule
\begin{equation}
\label{eq:fcn_random}
\begin{aligned}
p_{\chi_o}(x)
&=\sum_{a_1=-N}^{N}\cdots\sum_{a_D=-N}^{N}
\left(\prod_{d=1}^{D}p_{d,a_d}\right)\\
&\quad\times
\delta\left(x-f(x_{a_1},\ldots,x_{a_D})\right),
\end{aligned}
\end{equation}
where $p_{d,a}=\mathbb{P}\{\chi_d=x_a\}$. In general, the values $f(x_{a_1},\ldots,x_{a_D})$ do not belong to the grid $\mathcal{L}$; therefore, the distribution in~\eqref{eq:fcn_random} must be projected back to the regular grid by interpolation and clipping. The variable-node convolution and the min-sum check-node update considered below admit more direct implementations, so this explicit interpolation step is not needed for those two operations.

Since all node operations used below are associative, a function with an arbitrary number of arguments can be evaluated by repeated binary application, e.g.,
\begin{equation}
\label{eq:associative_function_random_variables}
f(\chi_1,\chi_2,\chi_3)=f(f(\chi_1,\chi_2),\chi_3).
\end{equation}
Consequently, we use the same symbol for an operation applied either to random variables or to their PMF vectors. For example, the vector notation
\begin{equation}
\label{eq:function_vector_notation}
\mathbf{p}_o=f(\mathbf{p}_1,\ldots,\mathbf{p}_D)
\end{equation}
means that the output PMF is the distribution of $f(\chi_1,\ldots,\chi_D)$ computed according to~\eqref{eq:fcn_random} and then represented on the grid $\mathcal{L}$.

Let us start with the binary case when the elements $h_{ij} \in \left\{0, 1\right\}$. 

We use the PMF vector representation introduced in~\eqref{eq:pmf_grid}--\eqref{eq:pmf_vector_equivalence}. Therefore, all decoder messages below are distributions over the grid $\mathcal{L}$, and operations on messages are understood in the vector sense of~\eqref{eq:function_vector_notation}. The independence assumption for incoming messages is the one stated before~\eqref{eq:fcn_random}.

Let $f_v$ denote the distributional counterpart of the variable-node LLR summation in~\eqref{eq:bp_variable_rule}. Likewise, let $f_c$ denote the distributional counterpart of the check-node update, using either the sum-product rule~\eqref{eq:bp_check_rule} or its min-sum approximation~\eqref{eq:ms_check_rule}. For the variable-node sum and min-sum check update considered below, these operations are associative; hence arbitrary-degree updates are evaluated by repeated binary application as in~\eqref{eq:associative_function_random_variables}.

For the $d$-th outgoing message of a degree-$D$ Tanner-graph node, the corresponding extrinsic operation is applied to all inputs except the $d$-th one. The associative nature of the processing functions allow us to avoid recomputing this operation independently for every output edge and to apply a prefix--suffix computation.

In this work, the check-node operation $f_c$ is the NMS version of~\eqref{eq:ms_check_rule} with scaling factor $\alpha = 0.75$, both in DE and in finite-length simulations.

For two independent input distributions $\mathbf{p}_1$ and $\mathbf{p}_2$, the $\min$ operation in~\eqref{eq:ms_check_rule} can be evaluated through the cumulative distribution function (CDF) of the minimum of the corresponding random variables $\chi_1$ and $\chi_2$
\begin{equation}
\label{eq:ccdf_ms}
F_{\min(\chi_1,\chi_2)}(x)
=1-\left(1-F_{\chi_1}(x)\right)\left(1-F_{\chi_2}(x)\right).
\end{equation}
Since the min-sum rule~\eqref{eq:ms_check_rule} also contains the product of signs, this CDF-based calculation is applied separately to the four sign combinations when evaluating $f_c(\mathbf{p}_1,\mathbf{p}_2)$.

Equivalently, this two-input check update can be written in the sign-split form used in the implementation. For a PMF vector $\mathbf{p}$, define the tail vector $\mathbf{c}_{\mathbf{p}}$ and split both vectors into negative, zero, and positive parts,
\begin{equation}
\label{eq:sign_split_tails}
\mathbf{p}=(\mathbf{p}^-,p^0,\mathbf{p}^+),
\qquad
\mathbf{c}_{\mathbf{p}}=(\mathbf{c}_{\mathbf{p}}^-,p^0,\mathbf{c}_{\mathbf{p}}^+),
\end{equation}
where $p^0$ is the probability mass at zero. The positive and negative tail components are
\begin{equation}
\label{eq:tail_components}
\begin{aligned}
\relax[\mathbf{c}_{\mathbf{p}}^+]_a &= \sum_{b=a}^{N}p_b,
& a&=1,\ldots,N,\\
\relax[\mathbf{c}_{\mathbf{p}}^-]_{-a} &= \sum_{b=-N}^{-a}p_b,
& a&=1,\ldots,N.
\end{aligned}
\end{equation}
For two independent inputs with PMFs $\mathbf{p}$ and $\mathbf{r}$, the unscaled min-sum output has the following sign-split tails and zero mass:
\begin{equation}
\label{eq:check_sign_cases}
\begin{aligned}
\mathbf{c}_{o}^{+}
&=\mathbf{c}_{\mathbf{p}}^{+}\mathbf{c}_{\mathbf{r}}^{+}
 +\mathbf{c}_{\mathbf{p}}^{-}\mathbf{c}_{\mathbf{r}}^{-},\\
\mathbf{c}_{o}^{-}
&=\mathbf{c}_{\mathbf{p}}^{+}\mathbf{c}_{\mathbf{r}}^{-}
 +\mathbf{c}_{\mathbf{p}}^{-}\mathbf{c}_{\mathbf{r}}^{+},\\
p_o^0&=p^0+r^0-p^0r^0.
\end{aligned}
\end{equation}
where $p^0$ and $r^0$ are the zero masses of $\mathbf{p}$ and $\mathbf{r}$, respectively. Products of tail vectors are componentwise. The positive and negative PMF parts are recovered from the corresponding tail vectors by adjacent differences. The NMS scaling factor is then applied as a remapping of the resulting PMF to the grid $\mathcal{L}$.

The variable-node operation $f_v$ associated with~\eqref{eq:bp_variable_rule} is a convolution over the quantized LLR grid and can be implemented using an FFT. Thus, the computational complexity of a binary variable-node operation is $\mathcal{O}(|\mathcal{L}|\log |\mathcal{L}|)$, whereas the min-sum check-node operation is $\mathcal{O}(|\mathcal{L}|)$.

\bibliographystyle{IEEEtran}
\bibliography{references}
\end{document}